\newcommand{\set}[1]{\ensuremath{\left\{#1 \right\}}}
\newcommand{\setcond}[2]{\set{#1 \, ; \: #2}}
\newcommand{\ceil}[1]{\ensuremath{\left\lceil #1 \right\rceil}}
\newcommand{\floor}[1]{\ensuremath{\left\lfloor #1 \right\rfloor}}
\newcommand{\ts}[0]{{t^\star}}
\newcommand{\ti}[0]{{\hat{t}}}
\newtheorem{theorem}{Theorem}[section]
\newtheorem{lemma}[theorem]{Lemma}
\title{A Greedy algorithm for local heating\footnotetext{This research is conducted within the project Network Optimization (17-10090Y) supported by Czech Science Foundation and the iCare project (11854) supported by STW. E-mail: \href{mailto:fink@ktiml.mff.cuni.cz}{fink@ktiml.mff.cuni.cz}, \href{mailto:j.l.hurink@utwente.nl}{j.l.hurink@utwente.nl}}}
\author[1,2]{Ji\v{r}\'i Fink}
\author[1]{Johann L. Hurink}
\affil[1]{University of Twente, Department of Computer Science, Mathematics and Electrical Engineering, P.O.~Box~217, 7500 AE, Enschede, The Netherlands}
\affil[2]{Department of Theoretical Computer Science and Mathematical Logic, Faculty of Mathematics and Physics, Charles University in Prague}
\date{}
\begin{document}
\maketitle
\begin{abstract}
This paper studies a planning problem for supplying hot water in domestic environment. Hereby, boilers (e.g. gas or electric boilers, heat pumps or microCHPs) are used to heat water and store it for domestic demands. We consider a simple boiler which is either turned on or turned off and is connected to a buffer of limited capacity. The energy needed to run the boiler has to be bought e.g. on a day-ahead market, so we are interested in a planning which minimizes the cost to supply the boiler with energy in order to fulfill the given heat demand. We present a greedy algorithm for this heating problem whose time complexity is ${\cal O}(T \alpha(T))$ where $T$ is the number of time intervals and $\alpha$ is the inverse of Ackermann function.
\end{abstract}

\section{Introduction}

In modern society, a significant amount of energy is consumed for heating water \cite{heating_water}. Almost every building is connected to a district heating system or equipped with appliances for heating water locally. Typical appliances for heating water are electrical and gas heating systems, heat pumps and Combined Heat and Power units (microCHP). The heated water is mostly stored in buffers to be prepared for the demands of the inhabitants of the building.

In this paper we consider a local heating systems which consist of
\begin{itemize} \itemsep=0em
	\item a supply which represents some source of energy (electricity, gas),
	\item a converter which converts the energy into heat (hot water),
	\item a buffer which stores the heat for later usage and
	\item a demand which represents the (predicted) consumption profile of heat.
\end{itemize}
A more formal definition of the considered setting for local heating and the used parameters and variables is given in Section \ref{sec:overview}. Although the presented model can consider arbitrary types of energy, in this paper we use \textit{electricity} and \textit{heat} to distinguish between consumed and produced energy. However note, that this simple model of a local heating system can not only be applied for heating water but has many other applications, e.g. heating demand of houses, fridges and freezers and inventory management. More details about those applications are given also in Section \ref{sec:overview}.

In the presented model we assume that the electricity used to heat the water has to be bought on a market. Although these prices are nowadays mostly fixed for private costumers, the supply companies are faced with variable prices resulting e.g. from a day ahead market. Furthermore, it is expected that in the future also the private customers get confronted with variable prices over time. This motivated the objective of minimizing the total cost of electricity consumed by the heating system during the planning period. Note that in cost or auction based control algorithms for Smart Grids, this objective is also used (see e.g. \cite{cost}).


\subsection{Problem statement and results} \label{sec:overview}

In the following we present a mathematical description of the studied model and a summary of the results of this paper.

The base of our modelling is a discretization of the planning horizon, meaning that we split the planning period into $T$ time intervals of the same length resulting in a set ${\cal T} = \set{1,\dots,T}$ of time intervals. In this paper, the letter $t$  is always used as an index of time intervals.

For the heating system, we consider a simple converter which has only two states: In every time interval the converter is either turned on or turned off for the complete time interval. The amount of produced heat during one time interval in which the converter is turned on is denoted by $H$. If the converter is turned off, then it consumes no electricity and produces no heat. Let $x_t \in \set{0,1}$ be the variable indicating whether the converter is running in time interval $t \in \cal{T}$ or not. Furthermore, if the converter is running, then it consumes some amount of electricity which costs $P_t$ in time interval $t \in \cal T$. In another words, $P_t$ is the price for running the converter in time interval $t \in \cal T$. Summarizing, the objective of the planning problem is minimizing the cost for producing the heat, which is given by $\sum_{t \in \cal{T}}{P_t x_t}$.

Coupled to the heating system is a buffer. The state of charge of the buffer in the beginning of time interval $t \in \cal{T}$ is denoted by $s_t$ and represents the amount of heat in the buffer. Note that $s_{T+1}$ is the state of charge at the end of planning period. Based of the physical properties of the buffer, the state of charge $s_t$ is limited by a lower bound $L_t$ and an upper bound $U_t$. In practice, these two bounds are usually constant over time since the upper bound $U_t$ is the capacity of buffer and the lower bound $L_t$ is zero. However, it may be useful to allow different values, e.g. a given initial state of charge can be modelled by setting $L_1$ and $U_1$ equal to the initial state. In this paper, we always assume that $L_1 = U_1$, meaning that the initial state of charge $s_1$ is fixed.

The (predicted) amount of consumed heat by the inhabitants of the house during time interval $t \in \cal{T}$ is denoted by $D_t$. This amount is assumed to be given and has to be supplied by either the buffer or the converter, or a combination of both, and is called the demand. In this paper, we study the off-line version of the problem, so we assume that both the demands $D_t$ and also the prices $P_t$ are given for the whole planning period already at the beginning of the planning period.

The variables $x_t$ specifying the operation of the converter and the states of charge of the buffer $s_t$ are restricted by the following constraints:
\begin{eqnarray}
	s_{t+1} = s_t + H x_t - D_t & \text{ for } & t \in {\cal T} \label{eq:charging} \\
	L_t \le s_t \le U_t & \text{ for } & t \in \set{1,\dots,T+1} \label{eq:buffer} \\
	x_t \in \set{0,1} & \text{ for } & t \in {\cal T}. \label{eq:converter}
\end{eqnarray}
Equation \eqref{eq:charging} is the charging equation of the buffer. During time interval $t \in \cal T$, the state of charge $s_t$ of the buffer is increased by the production of the converter which is $H x_t$ and it is decreased by demand $D_t$. Equations \eqref{eq:buffer} and \eqref{eq:converter} ensures that the domains of variables $s_t$ and $x_t$, respectively, are taken into account. As already mentioned, in this paper the objective function is to minimize the cost for the electricity needed to produce the heat  $\sum_{t \in \cal T} P_t x_t$.

In a previous paper \cite{cost_peak}, we presented an algorithm for the problem of minimizing cost for the local heating which is based on dynamic programming and it has the time complexity $\mathcal{O}(T^2)$. In Section \ref{sec:greedy} we prove that the optimal solution also can be calculated using a greedy algorithm. This greedy algorithm first sorts all time intervals by their prices $P_t$, and then it processes all time intervals one-by-one. In the basic version of the algorithm, the necessary updates in each step take time $\mathcal{O}(T)$, so the total time complexity of the algorithm is also $\mathcal{O}(T^2)$. In Section \ref{sec:union_find}, we then use the disjoint-set data structure of the union-find algorithm (see e.g. Cormen et al. \cite{Cormen}) to obtain a complexity of ${\cal O}(T \alpha(T))$ where $\alpha$ is the inverse of the Ackermann function. Hereby, we ignore the complexity of sorting the time intervals since the order may be a part of the input or be fount using a bucket sort algorithm (see e.g. Cormen et al. \cite{Cormen}).

\section{Related works and applications}

In the following we present related literature and give some possible further applications of this model.

Some related works can be found in the inventory management and lotsizing literature (see e.g. \cite{drexl1997lot,karimi2003capacitated} for reviews). In inventory control problems (see \cite{axsater2006inventory}) a buffer may represent an inventory of items, whereby a converter represent the production of items and the demand represent the ordered quantities. As our problem consists of only one commodity, the single item lot sizing problem is related (see \cite{brahimi2006single} for a review). Wagner and Whitin \cite{wagner1958dynamic} presented an $\mathcal{O}(T^2)$ algorithm for the uncapacitated lot-sizing problem which was improved by Federgruen and Tzur \cite{federgruen1991simple} to $\mathcal{O}(T \log T)$. On the other hand, Florian, Lenstra and Rinnooy \cite{florian1980deterministic} proved that the lot-sizing problem with upper bounds on production and order quantities is NP-complete. The computational complexity of the capacitated lot sizing problems is studied in \cite{bitran1982computational}. Pessoa at.al. \cite{de2019automatic} studied multiple variants of Multi-level capacitated lot-sizing problem which is an NP-hard problem, so they presented an automatic algorithm-generation approach based on heuristics and a multi-population genetic algorithm. Quezada et.al. \cite{quezada2019stochastic} proposed a stochastic dual dynamic integer programming algorithm for the multi-echelon multi-item lot-sizing problem. Our problem is a special case of the capacitated single item lot sizing problem which to our knowledge has not been considered in the literature.

One other related area is vehicle routing and scheduling (see e.g. \cite{laporte1992vehicle} for an overview of this area). For example, Lin, Gertsch and Russell \cite{hong2007linear} studied optimal vehicle refuelling policies. In their model, a refuelling station can provide an arbitrary amount of gas while our converter is restricted to two possible states of heat generation. Other papers on vehicle refuelling policies are more distant from our research since they consider that a car is routed on a graph (see e.g. \cite{sweda2012finding,lin2008finding}).

Where above we gave related but different optimization problems, in the following we present some possible applications of the model presented in this paper.

\begin{description}
	\item[Hot water:] Converter and buffer can be seen as a model of a simple electrical or gas boiler. Hereby, demand represents the consumption of hot water in a house.
	\item[House Heating:] The model may be used to express a very simple model for house heating. The converter represents a simple heater. The capacity of the buffer corresponds to thermal capacity of the heating system (e.g. hot water buffer or thermal capacity of concrete floors and walls) and the state of charge of the buffer is related to the temperature inside the house. Heat losses of the house may be modelled using the demand if we assume that the temperature difference inside the house does not have significant influence on the losses. More details about using thermal mass as a buffers is presented in \cite{Leeuwen2014thermal} and computing heat demands is explained in \cite{simple_control}.
	\item[Fridges and freezers:] A fridge essentially works in the opposite way than heating, so it may be modelled similarly. Hereby, the state of charge of the buffer again represents the temperature inside the fridge, but a higher state of charge means a lower temperature. The converter does not produce heat to the fridge but it decreases the temperature inside the fridge, so the converter increases the state of charge of the buffer (fridge). The demand decreases the state of charge of the fridge due to thermal loss and usage of the fridge by humans.
\end{description}

\section{Reformulation of the problem} \label{sec:preprocessing}

Where the problem formulation given in  Section \ref{sec:overview} is helpful to explain the problem, a reformulation of the problem presented in \cite{cost_peak} enables a better presentation and analysis of our algorithm. For sake of completeness, we give this reformulation in this section. We show that conditions \eqref{eq:charging} and \eqref{eq:buffer} can be replaced by one condition \eqref{eq:sum}.

First, we expand the recurrence formula \eqref{eq:charging} into an explicit equation
$$s_{t+1} = s_1 + \sum_{i=1}^t H x_i - \sum_{i=1}^t D_i.$$
Since we assume that the initial state of charge is given by $s_1 = L_1 = U_1$, we can replace $s_1$ by $L_1$ and substitute this into inequalities \eqref{eq:buffer}, leading to
$$\frac{L_{t+1} - L_1 + \sum_{i=1}^t D_i}{H} \le \sum_{i=1}^t x_i \le \frac{U_{t+1} - L_1 + \sum_{i=1}^t D_i}{H}.$$
Since the sum $\sum_{i=1}^t x_i$ is an integer between $0$ and $t$ we obtain the following simple constrains for this sums
\begin{eqnarray}
	A'_t \le \sum_{i=1}^t x_i \le B'_t \text{ for } t \in {\cal T} \label{eq:sum'}
\end{eqnarray}
where
\begin{eqnarray}
	A'_t = \max \set{0, \ceil{\frac{L_{t+1} - L_1 + \sum_{i=1}^t D_i}{H}}}
	\text{\hspace{3mm} and \hspace{3mm}}
B'_t = \min \set{t, \floor{\frac{U_{t+1} - L_1 + \sum_{i=1}^t D_i}{H}}}.
\label{eq:lower_upper}
\end{eqnarray}

Observe that the sequence of partial sums $\sum_{i=1}^t x_i$ for $t = 1,\ldots,T$ is non-decreasing and the difference of two consecutive partial sums is at most $1$. We say that a sequence $(Z_t)_t$ of $T+1$ integers $Z_0, Z_1, \dots, Z_T$ satisfies \eqref{eq:seq} if
\begin{equation}\label{eq:seq}
\begin{aligned}
	& Z_0 = 0, \\
	& Z_{t-1} \le Z_t \le Z_{t-1} + 1 \text{ for all } t \in {\cal T}.
\end{aligned}
\end{equation}
From parameters $A'_t$ and $B'_t$ we can be easily compute parameters $A_t$ and $B_t$ such that sequences $(A_t)_t$ and $(B_t)_t$ satisfy \eqref{eq:seq} and the binary variables $x_t$ satisfy \eqref{eq:sum'} if and only if they satisfy
\begin{eqnarray}
	A_t \le \sum_{i=1}^t x_i \le B_t \text{ for } t \in {\cal T}. \label{eq:sum}
\end{eqnarray}
For more details, see \cite{cost_peak}.

\section{Greedy algorithm}\label{sec:greedy}

In this section we present a greedy algorithm for the problem of fulfilling the heat demand with minimal cost which is based on the following formulation of the problem:
\begin{eqnarray}
\text{Minimize} & \sum_{t \in \cal T} P_t x_t \nonumber \\
\text{such that} & A_t \le \sum_{i=1}^t x_i \le B_t & \text{ for } t \in {\cal T} \label{eq:single}\\
& x_t \in \set{0,1} & \text{ for } t \in {\cal T} \nonumber
\end{eqnarray}
For the following, we assume that the given bounds $A_t$ and $B_t$ are already such that the sequences $(A_t)_t$ and $(B_t)_t$ satisfies \eqref{eq:seq}.

The first natural question is under which conditions problem \eqref{eq:single} has a feasible solution. An obvious condition for the existence of a feasible solution is that $A_t \le B_t$ for every $t \in \cal T$. This condition is also sufficient, since in this case $x_t = A_t - A_{t-1}$ for $t \in \cal T$ gives a feasible solution. Summarizing, we get the following lemma.

\begin{lemma}\label{lem:feasible}
The problem \eqref{eq:single} has a feasible solution if and only if
\begin{equation}\label{eq:feasible}
A_t \le B_t \text{ for every } t \in \cal T.
\end{equation}
\end{lemma}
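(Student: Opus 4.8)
The plan is to prove the equivalence by treating the two directions separately, with the forward direction being essentially immediate and the reverse direction relying on an explicit construction.

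First I would handle necessity. Suppose $(x_t)_t$ is feasible for \eqref{eq:single}, and write $S_t = \sum_{i=1}^t x_i$ for its partial sums. The constraints in \eqref{eq:single} directly give $A_t \le S_t$ and $S_t \le B_t$ for every $t \in {\cal T}$, and chaining these yields $A_t \le B_t$, which is exactly \eqref{eq:feasible}. No structural hypothesis is needed here.

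For sufficiency, I would assume \eqref{eq:feasible} and exhibit a concrete feasible solution, namely $x_t = A_t - A_{t-1}$ for $t \in {\cal T}$. The key observation is that, because $(A_t)_t$ satisfies \eqref{eq:seq}, we have $A_{t-1} \le A_t \le A_{t-1} + 1$, so each $x_t$ lies in $\set{0,1}$ and the integrality constraint is met. A telescoping sum together with $A_0 = 0$ then gives $\sum_{i=1}^t x_i = A_t$ for every $t$. It remains to verify the two inequalities in \eqref{eq:single}: the lower bound $A_t \le A_t$ is trivial, and the upper bound $A_t \le B_t$ is precisely the assumption \eqref{eq:feasible}.

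I expect no genuine obstacle; the statement is an equivalence whose proof is driven entirely by the structural assumption that $(A_t)_t$ and $(B_t)_t$ satisfy \eqref{eq:seq}. The one point requiring care is to invoke \eqref{eq:seq} when justifying $x_t \in \set{0,1}$: without the stepwise increase of $(A_t)_t$ the proposed construction need not be binary, so this is exactly where the hypothesis does its work.
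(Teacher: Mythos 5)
Your proposal is correct and follows exactly the paper's argument: necessity by chaining $A_t \le \sum_{i=1}^t x_i \le B_t$, and sufficiency via the same explicit construction $x_t = A_t - A_{t-1}$, whose binarity rests on \eqref{eq:seq} just as you note. You in fact spell out the verification (telescoping from $A_0 = 0$, checking both bounds) in more detail than the paper, which merely asserts the construction is feasible.
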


Since the condition \eqref{eq:feasible} can easily be evaluated in linear time, we assume in the remainder of the paper that the problem \eqref{eq:single} has a feasible solution. To solve the problem, we use the classical greedy approach. First, the time intervals are sorted by prices $P_t$. Then, the time intervals are processed in order of increasing prices and the converter is turned on in time interval $\ts \in \cal T$, if there exists a feasible solution with $x_\ts = 1$. Note, that the existence of such a feasible solution implies
$$A_{\ts-1} \le \sum_{i=1}^{\ts-1} x_i < \sum_{i=1}^{\ts} x_i \le B_\ts$$
which leads to the following lemma.

\begin{lemma}\label{lem:x1}
If the problem \eqref{eq:single} has a feasible solution $(x_t)_t$ satisfying $x_\ts = 1$ for a given $\ts \in \cal T$, then the inequality $A_{\ts-1} < B_\ts$ must hold.
\end{lemma}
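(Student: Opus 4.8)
The plan is to read the inequality $A_{\ts-1} < B_\ts$ directly off the feasibility constraints \eqref{eq:single}, evaluated at the two consecutive indices $\ts-1$ and $\ts$. First I would fix the given feasible solution $(x_t)_t$ with $x_\ts = 1$ and introduce its partial sums $S_t = \sum_{i=1}^t x_i$, using the convention $S_0 = 0$ for the empty sum. Feasibility at index $\ts$ supplies the upper bound $S_\ts \le B_\ts$, while feasibility at index $\ts-1$ supplies the lower bound $A_{\ts-1} \le S_{\ts-1}$.

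The key observation is that the assumption $x_\ts = 1$ forces the partial sum to increase strictly across $\ts$: since $S_\ts = S_{\ts-1} + x_\ts = S_{\ts-1} + 1$, we have $S_{\ts-1} < S_\ts$. Chaining the three relations then yields $A_{\ts-1} \le S_{\ts-1} < S_\ts \le B_\ts$, and hence $A_{\ts-1} < B_\ts$, which is the claim. This reproduces exactly the chain of inequalities previewed in the paragraph immediately before the statement, so no further machinery is needed.

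The only point that deserves a moment of care is the boundary case $\ts = 1$, where the lower bound $A_{\ts-1} \le S_{\ts-1}$ refers to index $0$, which lies outside the constraint range $t \in \cal T$ of \eqref{eq:single}. Here I would simply invoke the normalization from \eqref{eq:seq}, namely $A_0 = 0$, together with $S_0 = 0$, so that $A_0 \le S_0$ holds trivially and the same three-term chain goes through unchanged. I do not expect any genuine obstacle in this argument: its entire content is the single-unit increment of the partial sum at a turned-on interval combined with two instances of feasibility, so the proof should be a short display of inequalities rather than a calculation.
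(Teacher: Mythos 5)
Your proof is correct and is essentially the paper's own argument: the paper establishes the lemma via the same chain $A_{\ts-1} \le \sum_{i=1}^{\ts-1} x_i < \sum_{i=1}^{\ts} x_i \le B_\ts$ stated just before the lemma. Your extra care about the case $\ts = 1$ is sound (the convention $A_0 = 0$ from \eqref{eq:seq} handles it) and adds nothing beyond the paper's implicit treatment.
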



Actually, the condition $A_{\ts-1} < B_\ts$ is sufficient and the proof follows from Lemmas \ref{lem:update} and \ref{lem:optimal-base}.

The greedy algorithm starts with the (infeasible) solution $x_t = 0$ for every $t \in T$. Then it finds the cheapest time interval $\ts$ satisfying $A_{\ts-1} < B_\ts$ and it sets $x_\ts := 1$. Next, the values of sequences $(A_t)_t$ and $(B_t)_t$ have to be adopted to incorporate the choice $x_\ts = 1$. The following lemma gives update rules for the values of $(A_t)_t$ and $(B_t)_t$ in every step and shows that this update is correct.


\begin{lemma}\label{lem:update}
Let $\ts \in \cal T$ be an interval satisfying $A_{\ts-1} < B_{\ts}$, and let
\begin{eqnarray*}
t_A = \max\setcond{t \in {\cal T}}{A_t = A_{\ts-1}} + 1 \hspace{10mm} & t_B = \min\setcond{t \in {\cal T}}{B_t = B_\ts} \\
A^\star_t =
\begin{cases}
A_t & \text{ if } t < t_A \\
A_t - 1 & \text{ if } t \ge t_A \\
\end{cases}
\hspace{20mm} & B^\star_t =
\begin{cases}
B_t & \text{ if } t < t_B \\
B_t - 1 & \text{ if } t \ge t_B. \\
\end{cases}
\end{eqnarray*}
Then also the sequences $(A^\star_t)_t$ and $(B^\star_t)_t$ satisfy \eqref{eq:seq} and for every 0-1 sequence $(x_t)_t$ with $x_\ts = 1$ the condition \eqref{eq:sum} holds if and only if
\begin{equation}\label{eq:star}
A_t^\star \le \sum_{\substack{i=1 \\ i \neq \ts}}^t x_i \le B_t^\star \text{ for every } t \in \cal T.
\end{equation}
\end{lemma}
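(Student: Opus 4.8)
The plan is to rewrite condition \eqref{eq:sum} under the commitment $x_\ts=1$ as a bound on the partial sums of the \emph{other} variables, and then to identify $A^\star,B^\star$ with those bounds. Put $y_t=\sum_{i=1,\,i\neq\ts}^{t}x_i$. Because $x_\ts=1$, we have $\sum_{i=1}^{t}x_i=y_t$ for $t<\ts$ and $\sum_{i=1}^{t}x_i=y_t+1$ for $t\ge\ts$, so a direct substitution turns \eqref{eq:sum} into $\hat A_t\le y_t\le\hat B_t$ for all $t$, where $\hat A_t,\hat B_t$ equal $A_t,B_t$ for $t<\ts$ and equal $A_t-1,B_t-1$ for $t\ge\ts$. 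The statement then breaks into two parts: showing that the $(\hat A,\hat B)$-bounds and the $(A^\star,B^\star)$-bounds cut out exactly the same $y$, which is precisely \eqref{eq:star}, and verifying that $(A^\star_t)_t$ and $(B^\star_t)_t$ satisfy \eqref{eq:seq}. The whole point of the thresholds $t_A,t_B$ is that the naive sequences $\hat A,\hat B$ may themselves violate \eqref{eq:seq}.

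First I would record the structure forced by \eqref{eq:seq}. As $A$ is non-decreasing with unit steps, the indices with $A_t=A_{\ts-1}$ form a contiguous block containing $\ts-1$ and ending at $t_A-1$ (with the natural convention $A_0=A_{\ts-1}$ when $\ts=1$); hence $t_A\ge\ts$, we have $A_t=A_{\ts-1}$ throughout $\ts-1\le t\le t_A-1$, and $A$ increases by exactly $1$ at $t_A$ whenever $t_A\le T$. Symmetrically $t_B\le\ts$, $B_t=B_\ts$ throughout $t_B\le t\le\ts$, and $B$ increases by exactly $1$ at $t_B$. Comparing definitions then gives $A^\star_t=\hat A_t$ off the block $\ts\le t\le t_A-1$, where instead $A^\star_t=A_{\ts-1}=\hat A_t+1$, and $B^\star_t=\hat B_t$ off the block $t_B\le t\le\ts-1$, where instead $B^\star_t=B_\ts-1=\hat B_t-1$. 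Thus $A^\star_t\ge\hat A_t$ and $B^\star_t\le\hat B_t$ everywhere, which already yields one direction: every $y$ satisfying \eqref{eq:star} satisfies the $(\hat A,\hat B)$-bounds, hence \eqref{eq:sum}.

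The converse is the crux, and it is where the monotonicity of $y$ together with the choice of $t_A,t_B$ is essential. Assume the $(\hat A,\hat B)$-bounds hold. On the lower block $\ts\le t\le t_A-1$ I would combine monotonicity of $y$ with the bound at $\ts-1$: $y_t\ge y_{\ts-1}\ge\hat A_{\ts-1}=A_{\ts-1}=A^\star_t$ (for $\ts=1$ this reads $y_t\ge y_0=0=A_0$). On the upper block $t_B\le t\le\ts-1$ I would use $y_t\le y_\ts\le\hat B_\ts=B_\ts-1=B^\star_t$. Off the two blocks $A^\star,B^\star$ coincide with $\hat A,\hat B$, so \eqref{eq:star} holds for all $t$. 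In words, the strictly tighter constraints that $A^\star,B^\star$ place on the two blocks are redundant once $y$ is known to be a monotone partial-sum sequence; I expect this redundancy argument to be the main obstacle, as it is the only step that uses more about $y$ than that it is an arbitrary integer sequence.

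Finally I would verify \eqref{eq:seq} for $A^\star$ and $B^\star$, which I regard as routine. The consecutive differences of $A^\star$ match those of $A$ except at $t=t_A$ (when $t_A\le T$), where an extra $-1$ is applied; but $A$ increases by $1$ at $t_A$, so the net difference is $0$ and every difference of $A^\star$ stays in $\set{0,1}$; with $A^\star_0=A_0=0$ this is exactly \eqref{eq:seq}. The identical computation at $t=t_B$ settles $B^\star$, which completes the proof.
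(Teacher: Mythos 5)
Your proof is correct and takes essentially the same route as the paper: you rewrite \eqref{eq:sum} under $x_{\ts}=1$ as the shifted bounds $\hat A,\hat B$ (the paper's intermediate condition \eqref{eq:equivalent}), observe that $A^\star,B^\star$ differ from these only on the blocks $\ts\le t\le t_A-1$ and $t_B\le t\le\ts-1$ where the star bounds are tighter by one, and recover these tighter bounds from monotonicity of the partial sums anchored at $\ts-1$ and $\ts$ --- exactly the paper's key inequality $A^\star_t=A_{\ts-1}\le\sum_{i=1}^{\ts-1}x_i\le\sum_{i=1,\,i\neq\ts}^{t}x_i$. Your only deviations are cosmetic: you write out the upper-bound block, which the paper dismisses as symmetric, and you treat the $\ts=1$ edge case and the \eqref{eq:seq} verification slightly more explicitly than the paper's ``removing one step of the step function'' remark.
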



\begin{proof}
Since the sequences $(A^\star_t)_t$ and $(B^\star_t)_t$ differ from sequences $(A_t)_t$ and $(B_t)_t$ by removing one step of the step function at time intervals $t_A$ and $t_B$, respectively, $(A^\star_t)_t$ and $(B^\star_t)_t$ satisfy \eqref{eq:seq}. In order to prove the second part, let $(x_t)_t$ be a 0-1 sequence with $x_\ts = 1$. For such a sequence, the condition \eqref{eq:sum} is equivalent to
\begin{equation}\label{eq:equivalent}
\begin{aligned}
& A_t \le \sum_{\substack{i=1 \\ i \neq \ts}}^t x_i \le B_t && \;\text{ for every } t < \ts \text{ and} \\
& A_t - 1 \le \sum_{\substack{i=1 \\ i \neq \ts}}^t x_i \le B_t - 1 && \;\text{ for every } t \ge \ts.
\end{aligned}
\end{equation}
Thus, it remains to prove that conditions \eqref{eq:star} and \eqref{eq:equivalent} are equivalent. We show this only for the lower bounds, since the upper bound case is similar. Observe that the lower bounds of \eqref{eq:star} and \eqref{eq:equivalent} only differ for time intervals $t \in \cal T$ with $\ts \le t < t_A$. For such $t$ it holds that $A_t^\star = A_t$ and, thus, we only have to prove that \eqref{eq:equivalent} implies \eqref{eq:star} since the lower bound in \eqref{eq:star} is stronger. However, the implication follows directly from
$$A^\star_t = A_t = A_{\ts-1} \le \sum_{i=1}^{\ts-1} x_i \le \sum_{\substack{i=1 \\ i \neq \ts}}^t x_i.$$
\end{proof}

In practical applications, the price of electricity is usually positive. However, the presented greedy algorithm works also if the price $P_t$ is negative for some $t \in \cal T$. If all prices are non-negative, then without loss of generality we can assume that $A_T = B_T$ since there is an optimal solution which turns the converter on only $A_T$-times (that is, there exists an optimal solution with $\sum_{t \in \cal T} x_t = A_T$). In the general case where prices can be negative, the value of the objective function may be improved by turning the converter on more often. In the latter case, we need to extend the condition $A_{\ts-1} < B_\ts$ of the greedy algorithm to a condition which also considers negative prices. The new condition is
\begin{equation}\label{eq:apply}
A_{\ts-1} < B_\ts \text{ and } (A_{\ts-1} < A_T \text{ or } P_\ts < 0).
\end{equation}
If $A_{\ts-1} = A_T$ then the total minimal number of runs of the converter has to be reached already before the time interval $\ts$. Thus, the lower bound $(A_t)_t$ does not force the converter to be on in the time interval $\ts$. In this case, it is obvious that an optimal solution satisfies $x_\ts = 0$ unless the price $P_\ts$ is negative.

\begin{algorithm}[t]
\caption{Greedy algorithm for minimizing cost. \label{alg:basic}}
\KwIn{Sequences $(A_t)_t$ and $(B_t)_t$ satisfying \eqref{eq:seq} and \eqref{eq:feasible}}
\KwOut{Optimal solution $(x_t)_t$ the problem \eqref{eq:single}}
\BlankLine
initialization: $x_t := 0$ for all $t \in \cal T$\;
\For{$\ts \in \cal T$ sorted by prices $(P_t)_t$}{
\If{the condition \eqref{eq:apply} is satisfied}{
$x_\ts := 1$\;
Apply Lemma \ref{lem:update}
}}
\BlankLine
\KwRet{Optimal solution $(x_t)_t$}
\end{algorithm}

The greedy algorithm is summarized in Algorithm \ref{alg:basic}. In the following, mathematical induction is used to prove that this greedy algorithm finds an optimal solution. The following two lemmas provide the base of the induction and the induction step.

\begin{lemma}\label{lem:optimal-base}
If $A_T = 0$ and there is no $\ts \in \cal T$ such that $B_\ts > 0$ and $P_\ts < 0$, then $x_t = 0$ for all $t \in \cal T$ is an optimal solution.
\end{lemma}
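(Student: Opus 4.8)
The plan is to verify that the all-zeros vector is feasible and then to show that every feasible solution has nonnegative objective value, so that the value $0$ attained by the all-zeros solution cannot be improved upon.

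First I would record a consequence of the hypothesis $A_T = 0$: since $(A_t)_t$ satisfies \eqref{eq:seq}, it starts at $A_0 = 0$ and is nondecreasing, so $A_T = 0$ forces $A_t = 0$ for every $t \in \cal T$. Combined with feasibility \eqref{eq:feasible} this gives $0 = A_t \le B_t$, i.e. $B_t \ge 0$ throughout. The lower bounds in \eqref{eq:single} therefore read $0 \le \sum_{i=1}^t x_i$ and hold automatically for $0$-$1$ variables, while the upper bounds read $\sum_{i=1}^t x_i \le B_t$ with $B_t \ge 0$. In particular the all-zeros solution satisfies both and is thus feasible with objective value $0$.

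Next I would bound the objective of an arbitrary feasible solution $(x_t)_t$ from below. The key observation is that whenever $x_\ts = 1$ for some $\ts \in \cal T$, the partial sum satisfies $\sum_{i=1}^\ts x_i \ge x_\ts = 1$ because all summands are nonnegative, and feasibility gives $\sum_{i=1}^\ts x_i \le B_\ts$; hence $B_\ts \ge 1 > 0$. The hypothesis of the lemma states that no interval has simultaneously $B_\ts > 0$ and $P_\ts < 0$, so every interval with $x_\ts = 1$ must have $P_\ts \ge 0$. Summing only over these intervals yields $\sum_{t \in \cal T} P_t x_t = \sum_{t : x_t = 1} P_t \ge 0$.

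Combining the two parts, every feasible solution has objective value at least $0$, which equals the value of the feasible all-zeros solution, so $x_t = 0$ for all $t \in \cal T$ is optimal. I do not anticipate a genuine obstacle here: the argument is essentially a one-line charging of each ``on'' interval to its nonnegative price, and the only step needing a little care is the elementary implication $x_\ts = 1 \Rightarrow B_\ts > 0$, which rests on the nonnegativity of the partial sums together with the upper-bound constraint, rather than on any global structure of $(B_t)_t$.
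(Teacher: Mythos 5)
Your proof is correct and follows essentially the same two-part strategy as the paper: the all-zeros solution is feasible since $A_T = 0$ forces $A_t = 0$ throughout, and every feasible solution has nonnegative cost because any interval with $x_\ts = 1$ forces $B_\ts > 0$ and hence, by hypothesis, $P_\ts \ge 0$. The only difference lies in how the second part is established, and it slightly favors you: the paper argues via the threshold $\bar{t} = \max\setcond{t \in \cal T}{B_t = 0}$ together with the monotonicity of $(B_t)_t$ from \eqref{eq:seq} (tacitly treating the case where no $B_t$ vanishes and this maximum is undefined), whereas your per-interval charging $x_\ts = 1 \Rightarrow \sum_{i=1}^{\ts} x_i \ge 1 \Rightarrow B_\ts \ge 1$ needs neither \eqref{eq:seq} nor any case distinction.
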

\begin{proof}
Since $A_T = 0$ it follows that $A_t = 0$ for all $t \in \cal T$ and thus, the trivial solution $x_t = 0$ for all $t \in \cal T$ is feasible. Let $\bar{t} = \max\setcond{x \in {\cal T}}{B_t = 0}$ and let $(\bar{x}_t)_t$ be an arbitrary feasible solution. Observe that $\bar{x}_t = 0$ for $t \le \bar{t}$ and $P_t \ge 0$ for $t > \bar{t}$. Hence,
$$\sum_{t \in \cal T}P_t \bar{x}_t = \sum_{t > \bar{t}}P_t \bar{x}_t \ge 0 = \sum_{t \in \cal T}P_t x_t$$
which implies that the solution $(x_t)_t$ is optimal.
\end{proof}


\begin{lemma}\label{lem:optimal-step}
Assuming that there exists a time interval $\ts \in \cal T$ satisfying \eqref{eq:apply}, let $\ts$ be the time interval satisfying \eqref{eq:apply} with the minimal price $P_\ts$. Then, there exists an optimal solution $(x_t)_t$ such that $x_\ts = 1$.
\end{lemma}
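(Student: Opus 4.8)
The plan is to prove the statement by an exchange argument. Among all optimal solutions I would fix one, say $(\bar x_t)_t$, that in addition minimizes the number of active intervals $\sum_{t\in\cal T}\bar x_t$. If $\bar x_\ts=1$ there is nothing to show, so I assume $\bar x_\ts=0$ and construct from $(\bar x_t)_t$ another optimal solution in which $\ts$ is switched on, by switching $\ts$ on and, if necessary, switching off exactly one other interval, so that feasibility \eqref{eq:sum} is preserved and the objective does not increase.

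The first key step is to show that in this extremal solution \emph{every} active interval satisfies the selection condition \eqref{eq:apply}, and hence has price at least $P_\ts$ by the minimality of $P_\ts$. Suppose $\bar x_j=1$ but $j$ violates \eqref{eq:apply}. From $\bar x_j=1$ we get $A_{j-1}\le\sum_{i=1}^{j}\bar x_i-1\le B_j-1<B_j$, so the clause $A_{j-1}<B_j$ holds automatically and the violation must come from $A_{j-1}=A_T$ together with $P_j\ge0$. Because $A_{j-1}=A_T$ is the maximal value of the non-decreasing sequence $(A_t)_t$ and $\sum_{i=1}^t\bar x_i\ge A_T+1$ for every $t\ge j$, switching $j$ off alone keeps all lower bounds of \eqref{eq:sum} satisfied; since $P_j\ge0$ this does not increase the cost, contradicting either the optimality of $(\bar x_t)_t$ (if $P_j>0$) or the minimality of $\sum_t\bar x_t$ (if $P_j=0$).

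The second step is to exhibit the single exchange that switches $\ts$ on. Switching $\ts$ on raises every partial sum $\sum_{i=1}^t x_i$ with $t\ge\ts$ by one, so only the upper bounds of \eqref{eq:sum} can be violated. Let $r$ be the smallest index $t\ge\ts$ with $\sum_{i=1}^t\bar x_i=B_t$, if any. If $r>\ts$, I switch off an active interval in $(\ts,r]$, one of which exists since the partial sum is strictly below its bound at $\ts$ but equals $B_r$ at $r$; if $r=\ts$, then $\sum_{i=1}^{\ts-1}\bar x_i=B_\ts>A_{\ts-1}$ and I instead switch off an active interval in $(l,\ts)$, where $l$ is the largest index below $\ts$ with tight lower bound (or $0$), one of which exists because the partial sum rises from $A_l\le A_{\ts-1}$ to $B_\ts$. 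If no such $r$ exists and $P_\ts\le0$, I switch nothing off; and if no such $r$ exists while $P_\ts>0$, then \eqref{eq:apply} gives $A_{\ts-1}<A_T$, and inspecting the first index attaining $A_T$ shows that an active interval $j$ with $A_{j-1}<A_T$ exists either in $(\ts,\,\cdot\,]$ or just before $\ts$, which I switch off. In each case I would verify directly that the resulting $0$--$1$ sequence with $x_\ts=1$ satisfies \eqref{eq:sum} (feasibility of some solution with $x_\ts=1$ is in any case guaranteed by Lemma \ref{lem:x1} and the remark that $A_{\ts-1}<B_\ts$ is sufficient).

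Combining the two steps finishes the argument: the interval switched off has price at least $P_\ts$ by the first step, so the exchange changes the objective by $P_\ts-P_j\le0$, or by $P_\ts\le0$ when nothing is switched off, and the new solution is feasible, has $x_\ts=1$, and is therefore again optimal. I expect the main obstacle to lie in the cost bookkeeping rather than in the feasibility bookkeeping, namely in guaranteeing that the interval we are forced to switch off has price at least $P_\ts$. This is exactly where the two clauses of \eqref{eq:apply} and the minimality of the number of active intervals interlock: negative-price intervals satisfy \eqref{eq:apply} automatically, whereas a non-negative-price ``wasteful'' interval (one with $A_{j-1}=A_T$) cannot occur in a count-minimal optimal solution, which is what makes the price comparison work uniformly across the cases above.
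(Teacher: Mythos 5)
Your proposal is correct, but it takes a genuinely different route from the paper. The paper proves a stronger pointwise statement: for \emph{every} feasible solution $(\bar x_t)_t$ it builds a no-more-expensive feasible solution with $x_\ts=1$, splitting on whether $\sum_{i=1}^{\ts-1}\bar x_i > A_{\ts-1}$ or $=A_{\ts-1}$, and always exchanging with the \emph{last} active interval before $\ts$ or the \emph{first} active interval after $\ts$ --- a choice that makes the feasibility bookkeeping trivial, since the partial sums are constant on the affected range. The price comparison is then done per case by contradiction: if the removed interval $\ti$ were cheaper, Lemma \ref{lem:x1} plus the failure of \eqref{eq:apply} at $\ti$ forces $A_{\ti-1}=A_{\ts-1}=A_T$ and $P_\ti\ge 0$, whence $\ts$ itself would violate \eqref{eq:apply}. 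You instead normalize to a count-minimal optimal solution and prove once and for all that every active interval satisfies \eqref{eq:apply} (your Step 1 is correct: switching off a ``wasteful'' interval $j$ with $A_{j-1}=A_T$, $P_j\ge 0$ preserves feasibility because all partial sums from $j$ on are at least $A_T+1$, contradicting optimality or count-minimality), which makes the cost comparison $P_j\ge P_\ts$ uniform across all exchanges; your case analysis is then driven by tight constraints (first tight upper bound $r\ge\ts$, last tight lower bound $l<\ts$) rather than by the paper's slack/tight dichotomy at $\ts-1$. What each approach buys: yours decouples cost from feasibility cleanly and handles negative prices in one stroke, at the price of the extremal-solution normalization and a finer four-way case split; the paper's needs no normalization and its per-case contradiction is short, but the cost argument must be repeated (it is explicitly written only for Case 1, with Cases 2.1 and 2.2 declared ``similar''). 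The one soft spot in your write-up is the final subcase (no tight upper bound at or after $\ts$, $P_\ts>0$): the phrase ``inspecting the first index attaining $A_T$'' is too vague as it stands, but the claim is true and fixable with tools you already introduced --- if some active $j>\ts$ exists, switching it off is feasible precisely because no upper bound is tight on $[\ts,j)$; if none exists, then feasibility at $T$ together with $A_{\ts-1}<A_T$ forces $\sum_{i=1}^{\ts-1}\bar x_i\ge A_T>A_{\ts-1}$, and your $l$-device from the $r=\ts$ subcase yields an active interval in $(l,\ts)$ whose removal respects all lower bounds. So this is a presentational gap, not a mathematical one, but it should be written out before the proof is complete.
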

\begin{proof}
We prove the lemma indirectly by proving that for every feasible solution $(\bar{x}_t)_t$ there exists a feasible solution $(\hat{x}_t)_t$ such that $\hat{x}_\ts = 1$ and $\sum_{t \in \cal T}{P_t \hat{x}_t} \le \sum_{t \in \cal T}{P_t \bar{x}_t}$. Since we assume that there always exists a feasible solution, the lemma follows from this observation.

Let $(\bar{x}_t)_t$ be a feasible solution. If $\bar{x}_t = 1$, we are done. Thus, we assume that $\bar{x}_t = 0$, and we consider two cases.

\textit{Case 1} Assume $\sum_{i=1}^{\ts-1}{\bar{x}_i} > A_{\ts-1}$: Let $\ti = \max\setcond{t < \ts}{\bar{x}_t = 1}$ which is well-defined since $\sum_{i=1}^{\ts-1}{\bar{x}_i} > A_{\ts-1} \ge 0$. The new solution now is defined by $\hat{x}_\ti = 0$ and $\hat{x}_\ts = 1$ and $\hat{x}_t = \bar{x}_t$ for $t \in {\cal T} \setminus \set{\ti,\ts}$. In order to prove that $(\hat{x}_t)_t$ fulfills the mentioned conditions, we first prove that it is feasible. The equality $\sum_{i = 1}^t \hat{x}_t = \sum_{i = 1}^t \bar{x}_t$ does not hold only for time intervals $t$ with $\ti \le t < \ts$. However, for such $t$ it holds that
$$A_t \le A_{\ts-1} \le \sum_{i=1}^{\ts-1}{\bar{x}_i} - 1 = \sum_{i=1}^{t}{\bar{x}_i} - 1 = \sum_{i=1}^{t}{\hat{x}_i} < \sum_{i=1}^{t}{\bar{x}_i} \le B_t.$$
Hence, $(\hat{x}_t)_t$ is feasible.

Next, for sake of contradiction we assume $\sum_{t \in \cal T} P_t \hat{x}_t > \sum_{t \in \cal T} P_t \bar{x}_t$, implying that $P_\ti < P_\ts$. If $\ti$ satisfies \eqref{eq:apply}, we have a contradiction with the definition of $\ts$, so $\ti$ does not satisfy \eqref{eq:apply}. Applying Lemma \ref{lem:x1} with $\bar{x}_\ti = 1$ we get that $A_{\ti-1} < B_\ti$ which implies that $A_{\ti-1} = A_T$ and $P_\ti \ge 0$. Since $\ti < \ts$ we have $A_{\ti-1} = A_{\ts-1} = A_T$. Furthermore, we have $P_\ts > P_\ti \ge 0$, meaning that $\ts$ does not satisfy \eqref{eq:apply} which is a contradiction. Thus, $\sum_{t \in \cal T} P_t \hat{x}_t \le \sum_{t \in \cal T} P_t \bar{x}_t$.

\textit{Case 2} Assume $\sum_{i=1}^{\ts-1}{\bar{x}_i} = A_{\ts-1}$: For this case we have to consider two subcases depending on whether there exists a $t \ge \ts$ such that $\bar{x}_t = 1$. For both subcases we only give the corresponding solution $(\hat{x}_t)_t$; the proof of feasibility and cost is similar to Case 1.

\textit{Case 2.1} Assume $\bar{x}_t = 0$ for all $t \ge \ts$: The new solution is defined by $\hat{x}_\ts = 1$ and $\hat{x}_t = \bar{x}_t$ for $t \in {\cal T} \setminus \set{\ts}$.

\textit{Case 2.2} Assume that $\ti = \min\setcond{t \ge \ts}{\bar{x}_t = 1} \in \cal T$ is well-defined: The new solution is defined by $\hat{x}_\ti = 0$ and $\hat{x}_\ts = 1$ and $\hat{x}_t = \bar{x}_t$ for $t \in {\cal T} \setminus \set{\ti,\ts}$.

\end{proof}

\begin{theorem}\label{thm:greedy}
Algorithm \ref{alg:basic} finds an optimal solution for the local heating problem in time $\mathcal{O}(T^2)$.
\end{theorem}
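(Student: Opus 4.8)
The plan is to prove correctness by induction on the number of greedy steps, using the two preparatory lemmas as the base case and induction step, and then separately account for the running time.

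For correctness, I would argue as follows. Lemma~\ref{lem:optimal-step} guarantees that whenever a time interval $\ts$ satisfying \eqref{eq:apply} exists, the cheapest such interval can be set to $x_\ts = 1$ in some optimal solution. Lemma~\ref{lem:update} then shows that fixing $x_\ts = 1$ and replacing $(A_t)_t,(B_t)_t$ by the updated sequences $(A^\star_t)_t,(B^\star_t)_t$ yields an equivalent residual problem of the same form \eqref{eq:single} on the remaining variables, with sequences again satisfying \eqref{eq:seq}. Thus each greedy step reduces the problem to a strictly smaller instance while preserving optimality, and the recursion bottoms out exactly when no interval satisfies \eqref{eq:apply}. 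I would check that ``no interval satisfies \eqref{eq:apply}'' is precisely the hypothesis of Lemma~\ref{lem:optimal-base} (after the updates, $A_{\ts-1} \ge B_\ts$ for all $\ts$, together with the negative-price clause failing, forces $A_T = 0$ and no $\ts$ with $B_\ts > 0$ and $P_\ts < 0$), so Lemma~\ref{lem:optimal-base} certifies that leaving all remaining variables at $0$ is optimal for the residual problem. Composing the optimal choices made at each step then gives a globally optimal solution, which is exactly the output of Algorithm~\ref{alg:basic}.

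One subtle point I would be careful about is that the greedy algorithm processes intervals in a \emph{fixed} global price order, whereas Lemma~\ref{lem:optimal-step} speaks about the cheapest interval \emph{currently} satisfying \eqref{eq:apply}. I would verify that processing in increasing price order is consistent with always picking the cheapest eligible interval: an interval skipped because it failed \eqref{eq:apply} never becomes eligible again in a way that would have made it the cheapest choice at a later step, since the updates only \emph{lower} the bounds $A_t,B_t$ and hence only relax or tighten the condition monotonically. This is the step I expect to require the most care, because it is where the ``sorted once'' implementation of Algorithm~\ref{alg:basic} must be reconciled with the ``cheapest eligible at each stage'' statement of the induction step.

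For the time bound, the analysis is routine. Sorting is excluded by assumption. The loop runs over all $T$ intervals; in each iteration, testing \eqref{eq:apply} is $\mathcal{O}(1)$ given the current values of $A_{\ts-1}$, $B_\ts$, and $A_T$, and the only nontrivial work is the update of Lemma~\ref{lem:update}. Finding $t_A$ and $t_B$ and decrementing $A_t,B_t$ over the affected suffixes naively costs $\mathcal{O}(T)$ per step, giving $\mathcal{O}(T^2)$ overall. I would conclude by noting that this matches the claimed bound, deferring the improvement to $\mathcal{O}(T\alpha(T))$ via the union--find data structure to Section~\ref{sec:union_find}.
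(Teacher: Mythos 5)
Your proposal matches the paper's proof essentially verbatim: the same induction on the number of applications of Lemma~\ref{lem:update}, with Lemma~\ref{lem:optimal-base} handling the base case (via the observation that the absence of any interval satisfying \eqref{eq:apply} forces $A_T = 0$), Lemmas~\ref{lem:optimal-step} and~\ref{lem:update} providing the induction step, and the same $\mathcal{O}(T)$-per-update accounting for the quadratic bound. The sorted-order-versus-currently-cheapest subtlety you flag is genuine and is in fact silently assumed when the paper applies its induction hypothesis to the residual instance; your monotonicity claim is correct and can be verified from the positions $t_B \le \ts \le t_A$ in Lemma~\ref{lem:update}, which imply that an interval failing either clause of \eqref{eq:apply} can never become eligible after an update.
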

\begin{proof}
Algorithm \ref{alg:basic} assumes that the pre-computed sequences $(A_t)_t$ and $(B_t)_t$ satisfy \eqref{eq:seq} and \eqref{eq:feasible}. The initialization can easily be computed in linear time.

We use induction on the number of updates according to Lemma \ref{lem:update} to prove that Algorithm \ref{alg:basic} finds an optimal solution. As base of the induction, we assume that Algorithm \ref{alg:basic} never applies Lemma \ref{lem:update}. In this case, there is no time interval $\ts$ which satisfies \eqref{eq:apply}. Hence, $A_T = 0$ and Lemma \ref{lem:optimal-base} implies that the trivial solution $x_t = 0$ for all time intervals $t \in \cal T$ is an optimal solution.

For the induction step, assume that Algorithm \ref{alg:basic} finds a time interval $\ts$ satisfying \eqref{eq:apply} with the minimal price $P_\ts$ and that Lemma \ref{lem:update} is applied with this $\ts$. By the induction hypothesis, Algorithm \ref{alg:basic} finds an optimal solution $(x_t)_t$ for the instance with sequences $(A^\star_t)$ and $(B^\star_t)_t$ where $t \in {\cal T} \setminus \set{\ts}$. Algorithm \ref{alg:basic} now extends this solution by setting  $x_\ts = 1$. This solution $(x_t)_t$ is feasible for $(A_t)_t$ and $(B_t)_t$ by Lemma \ref{lem:update}.

By Lemma \ref{lem:optimal-step} there exists an optimal solution $(\bar{x}_t)_t$ satisfying $\bar{x}_\ts = 1$. By Lemma \ref{lem:update} the solution $(\bar{x}_t)_t$ is feasible for the instance with sequences $(A^\star_t)$ and $(B^\star_t)_t$ where $t \in {\cal T} \setminus \set{\ts}$. From induction hypothesis it follows that $\sum_{t \in {\cal T} \setminus \set{\ts}} P_t x_t \le \sum_{t \in {\cal T} \setminus \set{\ts}} P_t \bar{x}_t$. Hence, $\sum_{t \in {\cal T}} P_t x_t \le \sum_{t \in {\cal T}} P_t \bar{x}_t$ which implies that $(x_t)_t$ is an optimal solution.

Since Lemma \ref{lem:update} is called at most $T$-times and every step is evaluated in time $\mathcal{O}(T)$, the total time complexity is $\mathcal{O}(T^2)$.
\end{proof}

Algorithm \ref{alg:basic} has quadratic complexity because the updates of sequences $(A_t)_t$ and $(B_t)_t$ take linear time. This complexity easily can be improved using a binary tree. The basic idea is that values of $(A_t)_t$ and $(B_t)_t$ are handled independently by two separate balanced binary trees (see e.g. Cormen et al. \cite{Cormen}). In the following, we only describe the tree for the sequence $(A_t)_t$ since the tree for sequence $(B_t)_t$ can be handled analogously. Let $A^d_t = A_t - A_{t-1}$. The leaves of the tree store the value $A^d_t$ for all the time intervals $t \in \cal T$. Time intervals are assigned to leaves in a sorted way where the left subtree of every inner vertex contains earlier time intervals than the right subtree. Every inner vertex of the tree stores the sum of values of $A^d_t$ for all leaves $t$ in the subtree. Since the tree is constructed to be balanced, the length of every path from the root to a leaf is $\log_2(T) + \mathcal{O}(1)$. This binary tree is constructed in time $\mathcal{O}(T)$. It is now straight-forward to determine the simple exercise to find out how values $A_t$ and $B_t$ are determined and how to update both trees when Lemma \ref{lem:update} is applied. Both operations are performed in logarithmic time, so these binary trees improve the time complexity of Algorithm \ref{alg:basic} to $\mathcal{O}(T \log T)$. We skip more details because in the next section an even faster data structure is presented.

\section{Union-find} \label{sec:union_find}

In this section we use the disjoint-set data structure of the union-find algorithm (see e.g. Cormen et al. \cite{Cormen}) to store and update values of sequences $(A_t)_t$ and $(B_t)_t$ to reduce the time complexity of the presented algorithm.

A disjoint-set data structure is a data structure that keeps track of a set of elements partitioned into a number of disjoint (non-overlapping) subsets. A union-find algorithm is an algorithm that performs two useful operations on such a data structure:
\begin{description}
	\item[Find:] Determine which subset a particular element is in. This can be used for determining if two elements are in the same subset.
	\item[Union:] Join two subsets into a single subset.
\end{description}
Using a technique called path compression, both operation have amortized complexity $\mathcal{O}(\alpha(n))$ where $\alpha$ is the inverse of Ackermann function and $n$ is the number of elements.

Let us first summarize the operations required by Algorithm \ref{alg:basic}: We need to determine whether the condition \eqref{eq:apply} is satisfied and apply the updates in Lemma \ref{lem:update}. More specifically,
we need a data structure supporting the following operations for a given time interval $\ts$.

\begin{enumerate}[label={(U\arabic{enumi})}]
\item Find time intervals $t_A$ and $t_B$ as defined in Lemma \ref{lem:update}. \label{it:tAtB}
\item Determine whether $A_{\ts-1} < A_T$. \label{it:Amaximal}
\item Determine whether $A_{\ts-1} < B_\ts$. \label{it:AtBt}
\item Modify the data structure so that it gives the correct response for the above queries after the updates defined in Lemma \ref{lem:update}. \label{it:update}
\end{enumerate}

Consider a partitioning of time intervals $\ts \in \cal T$ by values $t_B$. The update \ref{it:update} can be performed by uniting two consecutive partitions and setting the value $t_B$ to be the smallest of the original partitions. Therefore, it is natural to use the disjoint-set data structure to store the partitioning time intervals $\ts \in \cal T$ by the values $t_B$ for every partition. Similarly, we use another disjoint-set data structures to partition the set of time intervals $\cal T$ according to the values $t_A$. So, this data structure is able to answer the queries \ref{it:tAtB} and \ref{it:Amaximal} and update itself \ref{it:update}. Note that we do not use these two disjoint-set data structures to store values $A_{\ts-1}$ and $B_\ts$ since their update may be too slow. However, we are able to determine whether $B_{t_1} = B_{t_2}$ for time intervals $t_1, t_2 \in \cal T$ (and similarly $A_{t_1-1} = A_{t_2-1}$) since $B_{t_1} = B_{t_2}$ if and only if $t_1$ and $t_2$ belong to the same partition in the disjoint-set data structure. This fact is used later to determine whether \eqref{eq:store} is satisfied.


In order to simplify further notation, let $\sim$ be a relation on the set of time intervals $\cal T$ such that $t_1 \sim t_2$ if $A_{t_1-1} = A_{t_2-1}$ and $B_{t_1} = B_{t_2}$ where $t_1, t_2 \in \cal T$. Observe that $\sim$ is an equivalence relation on $\cal T$ in which every factor class contains a set of consecutive time intervals. Factor classes of the equivalence $\sim$ are called B-A-sets. Since values of $A_t$ for all time intervals $t$ of one B-A-set $S$ are equal, we denote this value by $A_S$. Similarly, $B_S$ denotes the $B_t$ value of all time intervals $t$ of a B-A-set $S$. Update \ref{it:update} modifies the relation $\sim$, but the only change in the relation $\sim$ is that some B-A-sets are united. In fact, one update \ref{it:update} leads to at most two unions: B-A-sets containing time intervals $t_A-1$ and $t_A$ may be united and B-A-sets containing time intervals $t_B-1$ and $t_B$ may be united. Hence, we use the third disjoint-set data structure to store the partitioning into B-A-sets.

Let the difference $B_{\ts} - A_{\ts-1}$ denote $D_\ts$. Since all time intervals $\ts$ in one B-A-set have the same value of the difference $D_\ts$, this difference $D_\ts$ can be stored in every B-A-set. However, this approach is insufficient to reach the desired time complexity. The difference $D_\ts$ is actually stored only in some selected B-A-sets that form some kind of local minima of the sequence $(D_t)_t$ since \ref{it:AtBt} only requires to determine whether $D_\ts$ is zero or positive. Let us consider one B-A-set $S$ and let $S^-$ and $S^+$ be the preceding and the succeeding B-A-set, respectively. Observe that if $A_{S^-} = A_S$, then $B_{S^-}+1 = B_S$ and therefore $D_S = D_{S^-} + 1$. Similarly, if $B_S = B_{S^+}$, then $A_S+1 = A_{S^+}$ and therefore $D_S = D_{S^+} + 1$. In both cases it is not necessary to store the value $D_S$ since the facts that $D_{S^-} \ge 0$ or $D_{S^+} \ge 0$ imply that $D_S > 0$. The difference $D_S$ is stored in a B-A-set $S$ if and only if
\begin{equation}\label{eq:store}
	A_{S^-} < A_S \text{ and } B_S < B_{S^+}.
\end{equation}
Note that we are able to determine whether \eqref{eq:store} is satisfied using the first two union-set data structures. In summary, for $\ts \in S$ it holds that $A_{\ts-1} = B_\ts$ if and only if $D_S$ is stored in the B-A-set $S$ of the third disjoint-set data structure and $D_S = 0$.

In order to perform the update \ref{it:update}, let $S_1, S_2, S_3$, $S_4$ and $S^\star$ be B-A-sets containing time intervals $t_B-1$, $t_B$, $t_A-1$, $t_A$ and $\ts$; respectively. Note that $S_2$, $S_3$ and $S^\star$ may be the same. Observe that the difference $D_t$ is changed only for time intervals $t$ with $t_A \le t < t_B$, where $D_t$ decreases by one. Therefore, the evaluation of the condition \eqref{eq:store} may change only for sets $S_1, S_2, S_3$ and $S_4$. Furthermore, $S^\star$ is the only B-A-set which can satisfy \eqref{eq:store} and for which the stored difference $D_{S^\star}$ can change. The only possible changes in the partitioning of the time intervals $\cal T$ into B-A-sets are uniting $S_1$ and $S_2$ and uniting $S_3$ and $S_4$.

In the following, we only discuss updates of B-A-sets $S_1$ and $S_2$ since updates of B-A-sets $S_3$ and $S_4$ are analogous. Note that $B_{S_1}+1 = B_{S_2}$ and new values according to Lemma \ref{lem:update} satisfy $B^\star_{S_1} = B^\star_{S_2} = B_{S_1}$. If $A_{S_1} < A_{S_2}$, then B-A-sets $S_1$ and $S_2$ are not united and the value of the difference $D_{S_1}$ is deleted (if it already has been stored). If $A_{S_1} = A_{S_2}$, then B-A-sets $S_1$ and $S_2$ are united. Observe that if the united set satisfies \eqref{eq:store}, then $S_1$ has satisfied \eqref{eq:store} and the difference of the united set is the difference of $S_1$. Furthermore, if $S^\star \neq S_2$ and $S^\star \neq S_3$, then $S^*$ satisfies \eqref{eq:store} and the value $D_{S^\star}$ decreases by 1. All updates of the disjoint-set data structure of B-A-sets are summarized in Algorithm \ref{alg:update}.

\begin{algorithm}[t]
\caption{Update of the disjoint-set data structure for B-A-sets \label{alg:update}}
\KwIn{Time interval $\ts$}
\BlankLine
Find time intervals $t_A$ and $t_B$\;
Find B-A-sets $S_1, S_2, S_3$, $S_4$ and $S^\star$ containing $t_B-1$, $t_B$, $t_A-1$, $t_A$ and $\ts$\, respectively\;
\If{$A_{S_1} = A_{S_2}$}{
	Union of sets $S_1$ and $S_2$ into a set $S_{12}$\;
	\If{$S_{12}$ satisfies \eqref{eq:store}}{
		Set the difference $D_{S_{12}}$ to be the difference $D_{S_1}$ before the last uniting\;}}
\If{$B_{S_3} = B_{S_4}$}{
	Union of sets $S_3$ and $S_4$ into sets $S_{34}$\;
	\If{$S_{34}$ satisfies \eqref{eq:store}}{
		Set the difference $D_{S_{34}}$ to be the difference $D_{S_4}$ before the last uniting\;}}
\If{$S^\star \not= S_2$ and $S^\star \not= S_3$}{
	Decrease the difference $D_{S^\star}$ by one\;}
\end{algorithm}

Since the number of operations find and union on the disjoint-set data structures is $O(T)$ and the amortized complexity of these operations is $\mathcal{O}(\alpha(T))$, the following theorem follows.

\begin{theorem}
Algorithm \ref{alg:basic} with the disjoint-set data structures finds an optimal solution for the local heating problem in time $\mathcal{O}(T \alpha(T))$.
\end{theorem}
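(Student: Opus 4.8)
The plan is to decouple the theorem's two assertions, correctness and running time, and to reduce each to facts already in hand. For correctness, Theorem~\ref{thm:greedy} already guarantees that Algorithm~\ref{alg:basic} returns an optimal solution no matter how its internal steps are realized; hence it suffices to prove that the three disjoint-set structures answer the queries \ref{it:tAtB}--\ref{it:AtBt} correctly and that Algorithm~\ref{alg:update} performs the update \ref{it:update} exactly. To organize this I would state a single data-structure invariant to be maintained throughout the main loop: the first two disjoint-set structures partition $\cal T$ into the maximal blocks of consecutive intervals sharing a common value of $A_{t-1}$, respectively of $B_t$; the third partitions $\cal T$ into the B-A-sets, i.e.\ the classes of $\sim$; and a difference $D_S$ is stored in a B-A-set $S$ exactly when $S$ satisfies \eqref{eq:store}, in which case $D_S = B_S - A_S$.

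Granting the invariant, the three queries are read off directly. Query \ref{it:tAtB} returns $t_A$ and $t_B$ as the block boundaries in the first two structures, query \ref{it:Amaximal} compares the block of $\ts-1$ with the block containing $T$, and, by the discussion preceding \eqref{eq:store}, query \ref{it:AtBt} amounts to testing whether the B-A-set of $\ts$ carries a stored difference equal to zero, since $A_{\ts-1}=B_\ts$ holds precisely when such a $D_S$ is stored and vanishes. Each query touches only a constant number of \textbf{find} calls.

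The heart of the argument, and the step I expect to be hardest, is the induction showing that Algorithm~\ref{alg:update} restores the invariant after the bounds are changed as in Lemma~\ref{lem:update}. Here I would first pin down what actually moves: since $A_{t-1}$ decreases by one exactly for $t>t_A$ and $B_t$ decreases by one exactly for $t\ge t_B$, and since $t_B\le\ts\le t_A$, the difference $D_t=B_t-A_{t-1}$ drops by one precisely on the range $t_B\le t\le t_A$ and is unchanged elsewhere. Consequently the only possible block merges are those of the neighbours meeting at $t_A$ and at $t_B$, and the only B-A-sets whose status in \eqref{eq:store} or whose stored difference can change are the five sets $S_1,\dots,S_4,S^\star$ isolated before Algorithm~\ref{alg:update}. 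I would then verify, following the branches of the algorithm, that uniting $S_1,S_2$ (resp.\ $S_3,S_4$) exactly when their $A$-values (resp.\ $B$-values) agree, letting a merged set inherit its difference from $S_1$ (resp.\ $S_4$), and decrementing $D_{S^\star}$ only when $S^\star\notin\set{S_2,S_3}$, reproduce precisely the updated values of \eqref{eq:store} and of the stored differences.

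Finally the running time follows by a counting argument. The loop of Algorithm~\ref{alg:basic} runs $T$ times, and each iteration issues only a constant number of \textbf{find} and \textbf{union} operations over the three structures, so the total number of disjoint-set operations is $\mathcal{O}(T)$. Since each such operation costs amortized $\mathcal{O}(\alpha(T))$ on a universe of $T$ elements, as recalled above, the loop runs in $\mathcal{O}(T\alpha(T))$; adding the linear-time initialization, and recalling that sorting is excluded by assumption, yields the claimed bound.
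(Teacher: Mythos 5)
Your proposal is correct and takes essentially the same approach as the paper: correctness is reduced to Theorem~\ref{thm:greedy} plus the very invariant the paper maintains (the three disjoint-set structures holding the $A$- and $B$-blocks and the B-A-sets, with $D_S$ stored exactly when \eqref{eq:store} holds, merged sets inheriting their difference from $S_1$ resp.\ $S_4$, and $D_{S^\star}$ decremented only when $S^\star \notin \set{S_2, S_3}$), and the running time follows from the same count of $\mathcal{O}(T)$ find/union operations at amortized cost $\mathcal{O}(\alpha(T))$ each. A minor point in your favour: your range $t_B \le t \le t_A$ for where $D_t = B_t - A_{t-1}$ decreases is the correct one, whereas the paper's text states ``$t_A \le t < t_B$,'' which, since $t_B \le \ts \le t_A$, is evidently a typo.
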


\section{Conclusion}

This paper presents a $\mathcal{O}(T\, \alpha(T))$ algorithm for local heating problem, which is based on a greedy algorithm and where the low complexity of the algorithm results from the use of a sophisticated data structure.

Looking at the settings for heating systems in practice, we note that for a part of the systems a valve can be used to control the heat flow to a buffer (e.g. district heating) and thereby the decision set has a continuous domain. This mathematically means that the constrain \eqref{eq:converter} is replaced by $0 \le x_t \le 1$. We believe that it is possible to adopt our algorithm to this case, although some parts may become more technical.

\bibliographystyle{plain}
\bibliography{sg}

\end{document}